\newtheorem{theorem}{Theorem}
\begin{document} 

\title{
Mobility and Safety Benefits of Connectivity in CACC Vehicle Strings
}
%
%
%

\author{Vamsi~Vegamoor, Shaojie~Yan, Sivakumar~Rathinam,~\IEEEmembership{Senior~Member,~IEEE}
        and~Swaroop~Darbha,~\IEEEmembership{Fellow,~IEEE.}
\thanks{Vamsi Vegamoor is a doctoral student  in Mechanical Engineering at Texas A\&M University, College Station, TX 77843, USA; e-mail: vvk@tamu.edu}
\thanks{Sivakumar Rathinam and Swaroop Darbha are faculty members in Mechanical Engineering at Texas A\&M University, College Station, TX 77843-3123, USA; e-mail: \{srathinam, dswaroop\}@tamu.edu}
}

\maketitle

\begin{abstract}
In this paper, we re-examine the notion of string stability as it relates to safety by providing an upper bound on the maximum spacing error  of any vehicle in a homogeneous platoon in terms of the input of the leading vehicle. We reinforce our previous work on lossy CACC platoons by accommodating for burst-noise behavior in the V2V link. Further, through Monte Carlo type simulations, we demonstrate that connectivity can enhance traffic mobility and safety in a CACC string even when the deceleration capabilities of the vehicles in the platoon are heterogeneous. 
\end{abstract}

\section{Introduction}
Connected and Autonomous vehicles (CAVs) have received renewed interest from the automotive industry and policy makers alike \cite{Brookings}. Connected vehicles have the potential to improve safety and mobility on roadways by increasing throughput as they enable vehicle following with short inter-vehicle spacing that would otherwise be unsafe for human drivers to maintain. Adaptive Cruise Control (ACC) systems use on-board sensors to measure relative distance and relative velocity to the preceding vehicle. Cooperative Adaptive Cruise Control (CACC) systems have the added capability to obtain acceleration of one or more preceding vehicles (among other information) by utilizing Vehicle to Vehicle (V2V) communication.

To prevent collisions in a platoon of vehicles, local fluctuations in spacing errors need to be damped out as they propagate across the string of vehicles. This condition for string stability is often expressed and analysed in terms of a frequency domain condition \cite{swaroop1994phd} which ensures that the 2-norm of spacing errors do not amplify. However from a safety perspective, the maximum spacing error of the vehicles is more relevant as it dictates if a collision will occur. In this work, we present an upper bound for this infinity norm of spacing errors as a function of the lead vehicle's input signal.

For studying string stability, we model vehicles as point masses whose acceleration can be controlled through first order actuation dynamics:
\begin{equation}
    \ddot{x}_i=a_i, \quad \tau \dot{a}_i+a_i=u_i,
\end{equation}{}
where $u_i$ is the control input. While the lag in individual vehicles of the platoon may vary, it is reasonable to assume that it is bounded above by some value for all vehicles. We will use this upper bound as the maximum lag $\tau$ for the platoon. In this way, any heterogeneity in the parasitic lags in the platoon can be accounted for. We note that such models have been used successfully in experiments conducted in the California PATH projects \cite{CalPathExample,CalPathExample2}.

It has been long established \cite{swaroop1994phd} that for an ACC platoon, string stability can be guaranteed if the time headway chosen is at least twice the sum of parasitic lags in the vehicle, assuming homogeneity in vehicle capabilities. For passenger cars, these actuation lags are typically in the range of $0.2s$ to $0.5s$, necessitating a minimum time headway in the range of $0.4s$ to $1s$. This translates to maintaining an inter-vehicle spacing of $12$ to $30$ metres at highway speeds. Heavy-duty vehicles may have larger actuation lags, requiring spacing greater than $30 \;m$ to operate under ACC. This is not efficient, both from traffic mobility and fuel savings perspectives, as we know that shorter spacing is required to take advantage of slip-streaming effects for reduced aerodynamic drag. 


Previously, it was demonstrated that the time headway can be safely reduced further in CACC platoons using V2V communication\cite{V2VBenefits}. These results were initially developed on the assumption that the V2V link is lossless. But in reality, wireless channels are prone to packet drops due to interference or bandwidth restrictions. Earlier work from the authors \cite{vegamoor2019} proposed a new limit on the minimum time headway for CACC platoons, given a packet reception probability. In that work, the packet transmission process over the V2V link was modelled as a binomial random process.  It was brought to the authors' attention that packet drops over a transmission channel are not i.i.d (independent and identically distributed) in practice and that burst-noise models have been more widely accepted \cite{PacketLossRev2}. The discuss in Section \ref{sec:lossyCACC} reinforces our previous result by proving that the same limit is applicable for bursty packet drops.

Finally, we also discuss the benefits of connectivity by exploring the statistics of collisions in ACC and CACC platoons. String stability can be guaranteed if the time headway is chosen appropriately, provided the commanded acceleration/deceleration during platooning operation is within the maximum capabilities of each vehicle. But if the control input exceeds the capabilities of some of vehicles in the string, then string stability guarantees do not hold. Through numerical simulations, we demonstrate a reduction in the probability and number of collision events in heterogeneous CACC platoons compared to heterogeneous ACC platoons.

In short, the contributions of this work are as follows:
\begin{itemize}
    \item Provide a bound on the maximum spacing error of any vehicle in a platoon, which has more bearing on safety compared to traditional requirements of string stability.
    \item Extend the authors' previous result for lossy CACC platoons by accounting for burst noise characteristics in the transmission channel.
    \item Demonstrate the benefits of connectivity for reducing the probability of collisions in heterogeneous traffic.
\end{itemize}

\section{Maximum Spacing Errors in a String}\label{sec:SpacingBound}
Consider a string of $N$ vehicles, where vehicles are indexed in an ascending order with index $0$ referring to the lead vehicle. Let $\zeta_i(t)$ denote the state of the $i^{th}$ vehicle in a string at time $t$; $y_i(t)$ denote the output of the $i^{th}$ vehicle (such as spacing and velocity errors in the $i^{th}$ vehicle with respect to some origin).  Let $d_i(t)$ be the disturbance acting on the $i^{th}$ vehicle.  Let ${\cal S}_i$ denote the set of vehicles whose information is available to the $i^{th}$ vehicle for feedback. Let ${\mathcal I}_N:=\{1,2, \ldots, N\}$ denote the set of indices of all the vehicles in the platoon except the lead vehicle. For some appropriate functions $f_{ij}$ and $h_i$, the evolution of spacing errors may be described by a set of equations of the form:
\begin{eqnarray*}
{\dot \zeta}_i = \sum_{j \in S_i} f_{ij}(\zeta_i, \zeta_j, d_i), \quad e_i = h_i(\zeta_i), \quad i \in {\mathcal I}_N.
\end{eqnarray*}

When the disturbances are absent, note that $\zeta_i =0, \; i \in {\mathcal I}_N$ is an equilibrium solution of the above set of coupled evolution equations. We use the following generalization of the definition of string stability due to Ploeg et al \cite{PVN2014}, Besselink and Knorn\cite{BK2018}:

\vspace*{0.05in}

\noindent{\bf Definition (Scalable Weak Input-State Stability):} The nonlinear system is said to be scalably input-output stable if there exist functions $\beta \in {\mathcal KL}$ and $\sigma \in {\mathcal K}$ and a number $N_{min}$ such that for any $N \ge N_{min}$ and for any bounded disturbances $d_i(t), \; i \in {\mathcal I}_N$, 
 $$\max_{i \in {\mathcal I}_N}\|\zeta_i(t)\| \le \beta(\sum_{i \in {\mathcal I}_N}\|\zeta_i(0)\|, t) + \sigma(\max_{i \in {\mathcal I}_N} \|d_i(t)\|_{\infty}).$$

With feedback linearization, these equations reduce to:
\begin{eqnarray}
\label{eq:Lin}
\dot \zeta_1 &=& A_0 \zeta_1 + D w_0, \\
    \dot \zeta_i &=& A_0 \zeta_i + B y_{i-1}, \quad \forall i \ge 2\\
    y_i &=& C \zeta_i, \quad \forall i \ge 1,
\end{eqnarray}
where $w_0(t)$ denotes the acceleration of the lead vehicle,
$A_0$ is Hurwitz matrix, $B, C, D$ are respectively constant matrices.

In applications such as Adaptive Cruise Control (ACC) and Cooperative Adaptive Cruise Control (CACC), the set of vehicles from which information is available is ${\cal S}_i = \{i-1\}$ for a single preceding vehicle lookup scheme. In the next generation of CACC systems (CACC+ systems), we could have ${\cal S}_i = \{i-1, i-2, \ldots, i-r\},$ where $r$ depends on the connectivity. In a string of identical vehicles as has been shown in \cite{swaroop1994phd, V2VBenefits}, one obtains the following error evolution equations using a Laplace transformation for the case ${\cal S}_i = \{i-1\}$:
$$Y_i(s) = H(s) Y_{i-1}(s), $$
where $H(s)$ is a rational, proper, stable transfer function. The requirement of string stability has thus far \cite{Shahab90, ioannou1993autonomous, swaroop1994phd} been used as $\|H(jw)\|_{\infty} \le 1$.\\

From \cite{desoerfeedback}, it is known that the input-output relationship for a rational, proper transfer function is:
$$\|y_i\|_2 \le \|H(jw)\|_{\infty}\|y_{i-1}\|_2, $$
where the input and output are measured by their ${\mathcal L}_2$ norms (power in the error signals). Practical consideration for this application requires us to consider $\|y_i\|_{\infty}$ (the maximum value of the output) as it has direct bearing on safety; however, the corresponding input-output relationship from \cite{desoerfeedback} is
$$\|y_i\|_{\infty} \le \|h(t)\|_1 \|y_{i-1}\|_{\infty}, $$
where $h(t)$ is the unit impulse response of the transfer function $H(s)$. It is known from \cite{desoerfeedback} that $H(0)\le \|H(jw)\|_{\infty} \le \|h(t)\|_1$ and that $H(0) = \int_0^{\infty} h(t) = \|h(t)\|_1$, when $h(t) \ge 0$ for all $t \ge 0$. Typical information flow structures such as the one for ACC and CACC are such that $H(0)=1$, thereby putting a lower bound on $\|h(t)\|_1 =1$. However, for ascertaining string stability, one must attain this lower bound; an obstacle to attaining the lower bound is to find controller gains that render the unit impulse response of $H(s)$ non-negative. This is a variant of the open problem of transient control and there are currently no systematic procedures for determining the set of gains for this case.    

In this article, we exploit the bounded structure of leader's acceleration and the finite duration of lead vehicle maneuvers to prove that it suffices to consider $\|H(jw)\|_{\infty} \le 1$ to show the {\it uniform} boundedness of spacing errors.
\begin{theorem} \label{thm:outputBound}
Suppose:
\begin{itemize}
    \item The  error propagation equations are given by 
    \begin{eqnarray}
    \dot \zeta_1(t) &=& A_0 \zeta_1(t) + D w_0(t), \\
    \dot \zeta_i(t) &=& A_0 \zeta_i(t) + B y_{i-1}(t), \quad \forall i \ge 2\\
    y_i(t) &=& C \zeta_i(t), \quad \forall i \ge 1,
    \end{eqnarray}
    and $A_0$ is a Hurwitz matrix; 
    \item the lead vehicle executes a bounded acceleration maneuver in finite time, i.e., $w_0(t) \in {\cal L}_2 \cap {\cal L}_{\infty}$;
    \item $\|C(jw I - A_0)^{-1}B\|_{\infty} \le 1$ and
    \item For some $\alpha^*>0$, $\sum_{i=1}^N \|\zeta_i(0)\| \le \alpha^*$ for every $N$. 
\end{itemize}{} 
Then, there exists a $M_1, M_2 >0$, independent of $N$, such that for all $i \ge 1$:
$$\|y_{i}(t)\|_{\infty} \le M_1 \alpha^* +M_2\|w_0(t)\|_2. $$
\end{theorem}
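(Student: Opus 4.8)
The plan is to avoid chaining bounds in the $\mathcal{L}_\infty$ norm---which the discussion preceding the statement shows would require the harder condition $\|h(t)\|_1 \le 1$---and instead to propagate the estimates down the string entirely in the $\mathcal{L}_2$ norm, deferring the conversion to $\mathcal{L}_\infty$ until the very last step. Writing $H(s) = C(sI-A_0)^{-1}B$ with impulse response $h(t) = Ce^{A_0 t}B$ and $G(s) = C(sI-A_0)^{-1}D$ with impulse response $g(t) = Ce^{A_0 t}D$, the variation-of-constants formula gives, for $i \ge 2$,
$$y_i(t) = Ce^{A_0 t}\zeta_i(0) + (h * y_{i-1})(t),$$
and $y_1(t) = Ce^{A_0 t}\zeta_1(0) + (g * w_0)(t)$. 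Because $A_0$ is Hurwitz, $\|e^{A_0 t}\| \le K e^{-\lambda t}$ for some $K, \lambda > 0$, so $h, g \in \mathcal{L}_1 \cap \mathcal{L}_2$ and the zero-input response $Ce^{A_0 t}\zeta_i(0)$ lies in $\mathcal{L}_2 \cap \mathcal{L}_\infty$ with both norms bounded by a constant multiple of $\|\zeta_i(0)\|$.

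First I would use the $\mathcal{L}_2$ gain inequality recalled before the theorem: since the third hypothesis is exactly $\|H(jw)\|_\infty \le 1$, the zero-state map $y_{i-1} \mapsto h * y_{i-1}$ is non-expansive on $\mathcal{L}_2$, i.e. $\|h * y_{i-1}\|_2 \le \|y_{i-1}\|_2$. Combining this with the triangle inequality and the bound on the zero-input response yields the one-step recursion $\|y_i\|_2 \le \|y_{i-1}\|_2 + c_2\|\zeta_i(0)\|$ for $i \ge 2$, where $c_2 := \|C\|\left(\int_0^\infty \|e^{A_0 t}\|^2\,dt\right)^{1/2} < \infty$, together with the base case $\|y_1\|_2 \le c_2\|\zeta_1(0)\| + \|G(jw)\|_\infty\|w_0\|_2$.

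The crucial step is telescoping this recursion. Summing from the first vehicle to vehicle $i$ gives
$$\|y_i\|_2 \le \|G(jw)\|_\infty\|w_0\|_2 + c_2\sum_{j=1}^i \|\zeta_j(0)\| \le \|G(jw)\|_\infty\|w_0\|_2 + c_2\,\alpha^*,$$
where the final inequality invokes the hypothesis $\sum_{j=1}^N\|\zeta_j(0)\| \le \alpha^*$. This is precisely where the summability of the initial conditions is needed: it prevents the accumulated zero-input energy from growing with the string length, producing a uniform bound $\|y_i\|_2 \le B_2 := c_2\alpha^* + \|G(jw)\|_\infty\|w_0\|_2$ that is independent of both $i$ and $N$.

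Finally I would convert to $\mathcal{L}_\infty$ in a single pass. Applying the Cauchy--Schwarz inequality to the convolution,
$$|(h * y_{i-1})(t)| = \left|\int_0^t h(t-s)\,y_{i-1}(s)\,ds\right| \le \|h\|_2\,\|y_{i-1}\|_2 \le \|h\|_2\,B_2,$$
and bounding the zero-input response by $c_\infty\|\zeta_i(0)\| \le c_\infty\,\alpha^*$ with $c_\infty := \|C\|\sup_{t\ge 0}\|e^{A_0 t}\|$, I obtain $\|y_i\|_\infty \le (c_\infty + c_2\|h\|_2)\,\alpha^* + \|G(jw)\|_\infty\|h\|_2\,\|w_0\|_2$ for $i \ge 2$; the case $i = 1$ is handled directly, its convolution term being bounded by $\|g\|_2\|w_0\|_2$. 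Setting $M_1$ and $M_2$ equal to the larger of the two resulting coefficients establishes the claim. I expect the main obstacle to be conceptual rather than computational: one must resist propagating the $\mathcal{L}_\infty$ estimate directly---which fails without $\|h\|_1 \le 1$---and instead recognize that the frequency-domain hypothesis controls only the $\mathcal{L}_2$ gain, so the lone $\mathcal{L}_2 \to \mathcal{L}_\infty$ conversion must be postponed to the end, where it costs only a fixed factor $\|h\|_2$ that does not compound along the string.
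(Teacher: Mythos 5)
Your proof is correct and follows essentially the same route as the paper's: propagate the bounds in $\mathcal{L}_2$ down the string using the $\|H(jw)\|_\infty \le 1$ hypothesis, telescope the one-step recursion against the summability assumption $\sum_i \|\zeta_i(0)\| \le \alpha^*$, and only convert from $\mathcal{L}_2$ to $\mathcal{L}_\infty$ at the final step. The single (cosmetic) difference is that the paper invokes the energy-to-peak gain result of Corless--Zhu--Skelton (a Gramian/trace characterization $J^*$) for that last conversion, whereas you derive the same bound elementarily via Cauchy--Schwarz, with $\|h\|_2$ playing the role of $J^*$.
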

\begin{proof}
From $A_0$ being Hurwitz, and from Linear System Theory \cite{desoerfeedback}, one obtains for some constants, $\beta_2, \beta_{\infty}, \gamma_2, \gamma_{\infty}$, 
\begin{align*}
\zeta_1(t) &= e^{A_0t}\zeta_1(0) + \int_0^t e^{A_0(t-\tau)}D w_0(\tau) d \tau, \\
\zeta_i(t) &= e^{A_0t} \zeta_i(0) + \int_0^t e^{A_0(t-\tau)}B y_{i-1}(\tau) d \tau, \;  i \ge 2, \\
\Rightarrow \|y_1(t)\|_2 &\le \beta_2 \|\zeta_1(0)\| + \gamma_2\|w_0(t)\|_2, \\
\|y_1(t)\|_{\infty} &\le \beta_{\infty}\|\zeta_1(0)\| + \gamma_{\infty}\|w_0(t)\|_{\infty}, \\
\|y_i(t)\|_2 &\le \beta_2 \|\zeta_i(0)\| +  \|y_{i-1}(t)\|_2, \; \; i \ge 2.
\end{align*}
Note that the last inequality results from $ \|C(jwI-A_0)^{-1}B\|_{\infty} \le 1$. The last two inequalities can be expressed as:

\begin{align*}
\|y_{i}(t)\|_2 &\le \beta_2 (\sum_{j=2}^i \|\zeta_j(0)\|) + \|y_1(t)\|_2, \\
&\le \beta_2 (\sum_{i\in {\cal I}_N} \|\zeta_i(0)\|) + \gamma_2 \|w_0(t)\|_2, \\
&\le \beta_2 \alpha^* + \gamma_2 \|w_0(t)\|_2.
\end{align*}

From \cite{corless-zhu-skelton}, it follows that if 
\begin{eqnarray*}
J^* &:=& \max \quad trace (C^TPC), \quad \textit{subject to} \\
P &\succ& 0, \quad AP + PA^T+BB^T = 0,
\end{eqnarray*}
then for some $\eta >0$ and for all $i\ge 1$,
\begin{eqnarray*}
\|y_{i}(t)\|_{\infty} &\le&  \eta \|\zeta_i(0)\| + J^* \|y_{i-1}(t)\|_2, \\
&\le& (J^*\beta_2+\eta) \alpha^* + J^* \gamma_2\|w_0(t)\|_2.
\end{eqnarray*}

This completes the proof.
\end{proof}
\noindent{\bf Remark:} The condition that the initial errors must be absolutely summable is trivially satisfied as there are only finitely many vehicles in a string. For guaranteeing that errors are within a specified bound, one must ensure that the absolute sum of initial errors is within acceptable levels. 



\section{Adjusting Time Headway in Presence of Packet Drops in CACC vehicle strings} \label{sec:lossyCACC}
\subsection{CACC vehicle strings}
A typical constant time headway control law for the $i^{th}$ following vehicle in an ideal, loss-less CACC string can be written as:
\begin{align}
        u_i=K_{a}a_{i-1}-K_{v}(v_i-v_{i-1})-K_{p}(x_i-x_{i-1}+h_wv_i),
\end{align}
where $h_w$ is the time headway, ($K_a,K_v,K_p$) are tunable gains, $u_i$ is the control input and $x_i$, $v_i$, $a_i$  are states of the $i^{th}$ vehicle. 

Majority of the previous analytical work on CACC systems assumed that the communication channels are lossless. It was understood from experiments that losses in the V2V link would degrade vehicle following capability and may induce instability \cite{LeiITSconf2011,LossyCommsVargas}. Works like \cite{GracefulCACC} attempt to overcome this problem by estimating the lost information as necessary. Until recently \cite{vegamoor2019}, there were no quantitative bounds directly relating the minimum string stable time headway to the loss characteristics of the channel.  We restate the key results from \cite{vegamoor2019} for convenience:
\begin{itemize}[noitemsep,nolistsep]
    \item By modeling packet reception over the V2V link as a binomial random process, we showed that if a large number of realizations are averaged, the state trajectories of the stochastic system converge to that of a deterministic system where the random parameters in the state space representation are replaced by their expectations.
    \item Using this deterministic equivalent system, we derived a lower bound on the time headway as a sufficient condition for string stability:
\end{itemize}
\begin{align}\label{eqn:ECCResult}
h_w\ge h_{min}=\frac{2\tau}{1+\gamma K_a}, 
\end{align}
where $\gamma$ is the probability of successfully receiving a packet - a quantity that can be updated in real time using simple network performance measurement tools. Also, we have made the assumption that whenever a packet is successfully received, it contains accurate acceleration information. In reality, the information encapsulated may have sensor noise; we plan to account for this in our future work.

\subsection{Incorporating the Gilbert Channel Model}\label{subSec:GilbertCACC}
The Gilbert model \cite{Gilbert1960} and some of its extensions \cite{Gilbert_Eliott} \cite{ExtendedGilbert} are extensively used to simulate bursts of noise that occur in wireless transmission channels. The Gilbert model consists of two states: a `Good' state where no packets are corrupted/dropped, and a `Bad' state, where only $q\%$ of the packets are transmitted error-free. Let the transition probabilities from `Good' to `Bad' and `Bad' to `Good' be $P$ and $Q$ respectively. The transition diagram for a communication link between $i^{th}$ and $(i-1)^{th}$ vehicle is shown below, where $\hat{w} \in \{1,0\}$. 
 \begin{figure}[h]
 \centering
 \includegraphics[width=0.45\textwidth]{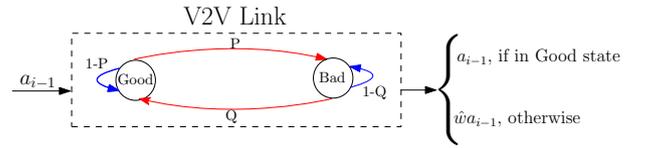}
 \caption{Communication link with Gilbert model}
 \label{fig:GilbertFig}
 \end{figure}

Since errors only occur in the bad state, the probability of a dropped packet is:
\begin{align}
    \mathbb{P}(\hat{w}=0)=(1-q)\frac{P}{P+Q}.
\end{align}
Consequently, the expectation of $\hat{w}$ is:
\begin{align}\label{eqn:GilbertGamma}
    \mathbb{E}[\hat{w}]=1-\frac{P(1-q)}{P+Q} = :\gamma,
\end{align}
where $\gamma$ is the probability of successful packet reception.

$P$ and $Q$ are typically small, if the states are to persist. Moreover, we have assumed that it is possible to return from the `Bad' state to the `Good' state.  If, for example, a hardware fault occurs and it is not possible for the V2V link to return to the `Good' state, then $Q=0$ and we will continue to receive packets with the probability $q$. This would be equivalent to the situation presented in our earlier work \cite{vegamoor2019}, with $\gamma =q$. 

Consider a platoon of $k$ vehicles. The $i^{th}$ following vehicle obtains the acceleration of the $(i-1)^{th}$ vehicle through wireless communication. Random variables $\hat{w}_{i,j} \in \{1,0\}$ are used to represent the reception/loss of the acceleration packet from the $j^{th}$ vehicle to the $i^{th}$ vehicle. From measurement, we can obtain $\gamma:= \mathbb{E}[\hat{w}_{i,j}]$. Without loss of generality, we can consider $\gamma$ to be same for the whole platoon. Let the lead vehicle be imparted some control action $u_L$ by a driver (or otherwise).
The equation of motion for the lead vehicle and each of the $i^{th}$ following vehicles, $i\ge1$ is given by:
\begin{align} \label{eqn:StochasticCACC}
    \tau \dot{a}_0+a_0&=u_L, \nonumber \\
    \tau \dot{a}_i+a_i&=u_i=\hat{w}_{i,i-1}K_{a}a_{i-1}-K_{v}(v_i-v_{i-1}) \nonumber \\
    & \qquad \qquad-K_{p}(x_i-x_{i-1}+h_wv_i).
\end{align}
For the remainder of this section, we work under the following assumptions which are reasonable from a practical perspective.
\begin{itemize}
    \item The leading vehicle's trajectory is purely deterministic.
    \item The V2V link operates at a rate equal to or greater than the vehicle controller's sampling rate.  
    \item The communication link between any pair of vehicles is independent from any other pair. That is, the state of one transceiver doesn't affect the state of other transceivers in the platoon
\end{itemize}

If we consider the platoon of vehicles as a stochastic system, its equation of motion can be written as:
\begin{align}\label{eqn:StochasticStateSpace}
    \dot{\hat{X}}=\hat{A}(\hat{w}(t))\hat{X}+BU,
\end{align}
where $\hat{X}$=$(x_0,v_0,a_0,x_1,v_1,a_1,\cdots x_k,v_k,a_k)$ and $U =u_L$, the input to the lead vehicle. Note that only the system matrix $\hat{A}(\hat{w}(t))$ has random elements.

Let $\Delta t$ be the controller time step so that the total (finite) run time is $t_m =m\Delta t$, $m\in \mathbb{N}$. Let us consider the evolution of the stochastic state vector over the first time interval $[0,t_1)$:
\begin{align}
    \hat{X}(t_1)= \hat{\Phi}(t_1,0)\hat{X}(0)+\int_{0}^{t_1}\hat{\Phi}(t_1,\zeta)BU(\zeta)d\zeta \label{eqn:StochasticSTM},
\end{align} 
where ${\hat{\Phi}(t_1,0)}$ is the stochastic state transition matrix, dependent on the values of $\hat{w}_{i,j}$ at $t=0$. For small controller time steps, it is reasonable to assume that the input $U$ is updated by the controller at the beginning of each time step and is held constant during that interval.
\begin{align}
    \hat{X}(t_1)= \hat{\Phi}(t_1,0)\hat{X}(0)+\int_{0}^{t_1}\hat{\Phi}(t_1,\zeta)d\zeta BU(0)\label{eqn:StochasticSTM_adjusted}
\end{align} 
Since we have defined $\mathbb{E}[\hat{w}_{i,j}]=\gamma$, let us consider replacing the random elements in the system matrix of equation (\ref{eqn:StochasticStateSpace}) with their expected values. Then we get some deterministic system:
\begin{align}
        \dot{\bar{X}}=\bar{A}\bar{X}+BU
\end{align}
Our goal now is to show that $\mathbb{E}[\dot{\hat{X}}(t)]=\dot{\bar{X}}(t)$, for all $t \in [0,t_m]$.
For the deterministic system, the state evolution for the first interval $[0,t_1)$ is:
\begin{align}\label{eqn:DeterministicSTM}
     \bar{X}(t_1)= \bar{\Phi}(t_1,0)\bar{X}(0)+\int_{0}^{t_1}\bar{\Phi}(t_1,\zeta)d\zeta BU(0).
\end{align}
Now consider $\hat{\Phi}(t_1,0)$ and $\bar{\Phi}(t_1,0)$. Since $\hat{A}(\hat{w}(t))$ only changes at each controller time step, it is constant in the interval $[0,t_1)$ and takes the value $\hat{A}(\hat{w}(0))=:\hat{A}_1$. So, we can write
\begin{align}
        \hat{\Phi}(t_1,0)&=e^{\int_0^{t_1} \hat{A}(\hat{w}(\xi))d\xi}= e^{\hat{A}_1t_1} \label{eqn:BasePhiHat}\\
        \bar{\Phi}(t_1,0)&=e^{\int_0^{t_1} \bar{A}d\xi}= e^{\bar{A}t_1}
\end{align}

Now we use the power series expansion for the exponential matrices:
\begin{align}
    e^{\hat{A}_1t_1}&=I+\hat{A}_1t_1+\frac{(\hat{A}_1t_1)^2}{2!}+\frac{(\hat{A}_1t_1)^3}{3!} +\cdots \label{eqn:expAhat}\\ 
    e^{\bar{A}t_1}&=I+\bar{A}t_1+\frac{(\bar{A}t_1)^2}{2!}+\frac{(\bar{A}t_1)^3}{3!} +\cdots \label{eqn:expAbar}
\end{align}
While generally not true for random matrices \cite{matrixConvexity},
\begin{equation}\label{eqn:MatrixExpectation2}
\mathbb{E}[\hat{A}^n_1]= \bar{A}^n \text{ }\forall n \in \mathbb{N}
\end{equation}
for CACC system matrices with one vehicle lookup.

This is due to its specific structure since the diagonal elements of the system matrix are purely deterministic and the powers of $\hat{A}$ only contain elements that are multi-linear in $\hat{w}_{i,j}$. 
This allows us to exploit the fact that the expectation of a product of independent random variables is the product of their expectations. We have noticed that this convenient multi-linear property of the powers of system matrices is afforded only for one vehicle lookup schemes (CACC) but not for platoons that utilize communicated information from two or more preceding vehicles (CACC+ systems). An example with system matrices for a three vehicle platoon has been provided in the arXiv version of this paper \cite{arXiv_Version}, which has been omitted here for brevity.

Thus, over a large number of realizations,
    \begin{align}\label{eqn:BasePhiExp}
        \mathbb{E}[\hat{\Phi}(t_1,0)]=\bar{\Phi}(t_1,0). 
    \end{align}
Since the initial conditions can be assumed to be the same in equations (\ref{eqn:StochasticSTM}) and (\ref{eqn:DeterministicSTM}), i.e., $\hat{X}(0)=\bar{X}(0)$, we get:
\begin{align}
    \mathbb{E}[\hat{X}(t_1)]=\bar{X}(t_1),
\end{align} \label{eqn:InductionBase}
for the first interval $[0,t_1)$. Let this form the base case with the induction hypothesis for interval $[t_{k-1},t_k)$ as:
\begin{align}
    \mathbb{E}[\hat{X}(t_k)]=\bar{X}(t_k) \label{eqn:InductionHypth}
\end{align}
Now consider the next interval $[t_k,t_{k+1})$
\begin{align*}
    \hat{X}(t_{k+1})= \hat{\Phi}(t_{k+1},t_k)\hat{X}(t_k)+\int_{t_k}^{t_{k+1}}\hat{\Phi}(t_{k+1},\zeta)d\zeta BU(t_k) \\
    \bar{X}(t_{k+1})= \bar{\Phi}(t_{k+1},t_k)\bar{X}(t_k)+\int_{t_k}^{t_{k+1}}\bar{\Phi}(t_{k+1},\zeta)d\zeta BU(t_k)
\end{align*}
Using a similar reasoning as in equations (\ref{eqn:BasePhiHat} - \ref{eqn:BasePhiExp}), we can show that $\mathbb{E}[\hat{\Phi}(t_{k+1},t_k)]=\bar{\Phi}(t_{k+1},t_k)$.

Again, note that the term $\hat{\Phi}(t_{k+1},t_k)\hat{X}(t_k)$ only contains products of independent random variables. From the induction hypothesis in equation (\ref{eqn:InductionHypth}), we can claim $\mathbb{E}[\hat{\Phi}(t_{k+1},t_k)\hat{X}(t_k)]=\bar{\Phi}(t_{k+1},t_k)\bar{X}(t_k)$. This yields:
\begin{align}
    \mathbb{E}[\hat{X}(t_{k+1})]=\bar{X}(t_{k+1}).
\end{align}
From the principle of mathematical induction, $\mathbb{E}[\hat{X(t)}]=\bar{X}(t)$ for all finite $t \in [0,t_m]$. This allows us to replace equation (\ref{eqn:StochasticCACC}) with its deterministic equivalent.
\begin{align}
    \tau \dot{a}_i+a_i&=\gamma K_{a}a_{i-1}-K_{v}(v_i-v_{i-1}) \nonumber \\ & \qquad-K_{p}(x_i-x_{i-1}+d+h_w v_i) 
\end{align}
Following the procedure in \cite{V2VBenefits} for this governing equation, we obtain the same bound on the minimum employable time headway as in our earlier work \cite{vegamoor2019}.
\begin{align} \label{eqn:PL_CACC_Limit}
h_w\ge h_{min}=\frac{2\tau}{1+\gamma K_a} 
\end{align}
\subsection{Numerical Simulations with Gilbert Channel}
Let us consider a homogeneous platoon where the parasitic lags of all vehicles are upper bounded by $\tau=0.5s$. The transition probabilities from Fig. \ref{fig:GilbertFig} were set to $P=0.3$ and $Q=0.1$. Further, we assume that the all packets are transmitted successfully while the channel is in the `Good' state and only $20\%$ of the packets are successfully transmitted in the `Bad' state (i.e., $q=0.2$). 
We now simulate a platoon of six (one lead + five following) vehicles operating under a constant time headway policy as stated in equation (\ref{eqn:StochasticCACC}) using Simulink. The lead vehicle initially moves with a constant velocity of $25 m/s$, then at $t=10s$, decelerates at the rate of $-9m/s^2$ to $16 m/s$, which it maintains for the rest of the simulation. This setup simulates an emergency braking maneuver. Gains ($K_a, K_v, K_p$) were set to ($0.4,1,0.8$). Spacing error plots for the first, third and fifth following vehicle for a time headway of $0.75s$ are shown in Fig. \ref{fig:GilbertUnstableSim}.
 \begin{figure}[h] 
 \centering
 \includegraphics[width=5.6cm, height=4cm]{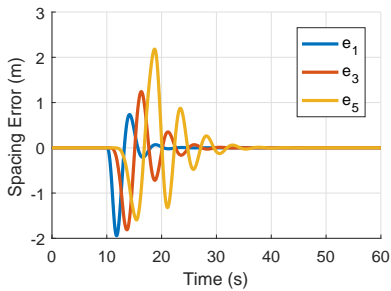}
 \caption{Spacing errors of a platoon using $h_w=0.75s$ (Unstable).}
\label{fig:GilbertUnstableSim}
 \end{figure}  

We see that the string is unstable as the spacing errors are amplified across the platoon. This is expected, since even though the headway chosen ($0.75s$) is above the ideal, lossless CACC limit\cite{V2VBenefits} ($\frac{2\tau}{1+K_a} =0.71s$), the lossy Gilbert channel induces instability in the platoon. To accommodate for this, we calculate the average packet reception rate, $\gamma$ from equation (\ref{eqn:GilbertGamma}) to be used in equation (\ref{eqn:PL_CACC_Limit}). We get $\gamma=0.4$ and $h_{min}=0.86s$.

So we repeat the simulation with a new headway $h_w=0.9>h_{min}$. From Fig. \ref{fig:GilbertStableSim}, we observe that the spacing errors diminish as expected. It is to be noted that the time headway employed is still below the minimum stable headway for ACC platoons ($2\tau=1s$), showing that even with lossy communication links there is no need to downgrade from CACC to ACC as long as the headway is adjusted using equation (\ref{eqn:PL_CACC_Limit}).

 \begin{figure}[h]
 \centering
 \includegraphics[width=5.6cm,height=4cm]{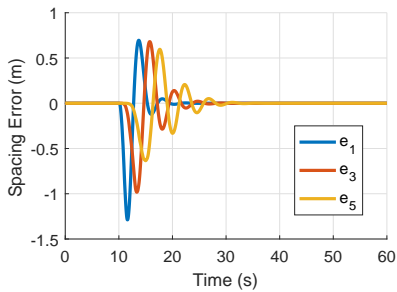}
 \caption{Spacing errors of a platoon using $h_w=0.9s$ (Stable).}
  \label{fig:GilbertStableSim}
 \end{figure}
\section{Safety Analysis with ACC and CACC strings} \label{sec:Safety}

So far, we have only considered homogeneous platoons. In reality, not all vehicles on roadways are identical. There are two sources of heterogeneity that need to be accounted for from a safety perspective: heterogeneity in the time headway selected by each vehicle and heterogeneity in maximum deceleration limits of vehicles.

In most commercial Advanced Driver Assistance Systems (ADAS), the time headway for vehicle following is user-selectable, leading to heterogeneity in time headway. Such platoons have been analysed previously \cite{OroszHeteroDelays,NijmeijerHeteroCACC}. Moreover, in \cite{Ankem2019EffectOH}, is shown that even with a random distribution of time headway selection, string stability can be maintained with appropriate gain selection.

In this section, we study the second source of homogeneity: deceleration limits of each vehicle. A vehicle's ability to brake is affected by various factors (tire size, tire wear, road surface, etc.). This leads to a distribution in the maximum achievable deceleration of each vehicle in a platoon. Consequently, a rear end collision could occur if a vehicle with worn/old brakes follows another with a better deceleration capability. 

Safety in such scenarios can be assessed using metrics such as probability of collision and average number of collision events per platoon (a measure of severity of instability in the platoon). We perform a large number of numerical simulations for ACC and CACC platoons under emergency braking and compare these metrics between the two platoon modes. The six (1 leader + 5 followers) vehicle simulations were set up using the following guidelines:

\begin{itemize}
    \item At beginning of every realization, maximum deceleration for each vehicle is randomly picked based on a probability distribution provided by Godbole and Lygeros (1997)\cite{GodboleLygeros97} for passenger vehicles.
    \item  Initially, all vehicles are at equilibrium and are travelling at $30 m/s$. The lead vehicle brakes with its maximum deceleration assigned by probability distribution from $t=0$ till it comes to a complete stop.
    \item Each vehicle is assumed to be a rigid body with a constant length. When the position of any points of two vehicles overlap, we consider a collision to have occurred. If two vehicles collide, both of them are assumed to stop instantaneously. This is a source of potential improvement in future work as the current simulation setup does not account for momentum transfer effects.
\end{itemize}

For these simulations, the parameters ($K_p, K_v, K_a, \tau $) were selected as ($2,0.8,0.25, 0.5$) respectively and time headway was picked as $1s$. Note that the time headway is sufficient to guarantee string stability for both ACC and CACC platoons if all vehicles were homogeneous. Communication for CACC platoons are assumed perfect with no packet drops.  At end of each realization, we record whether a collision occurred (platoon is not string stable) and if so, the number of collision events involved. This was repeated to obtain 10,000 realizations each for ACC and CACC platoons. The probability of collision and average numbers of collision events per unstable platoon is shown in Fig. \ref{BarFig} for ACC and CACC.
 \begin{figure}[h]
  \centering
 \includegraphics[width=5cm,height=4cm]{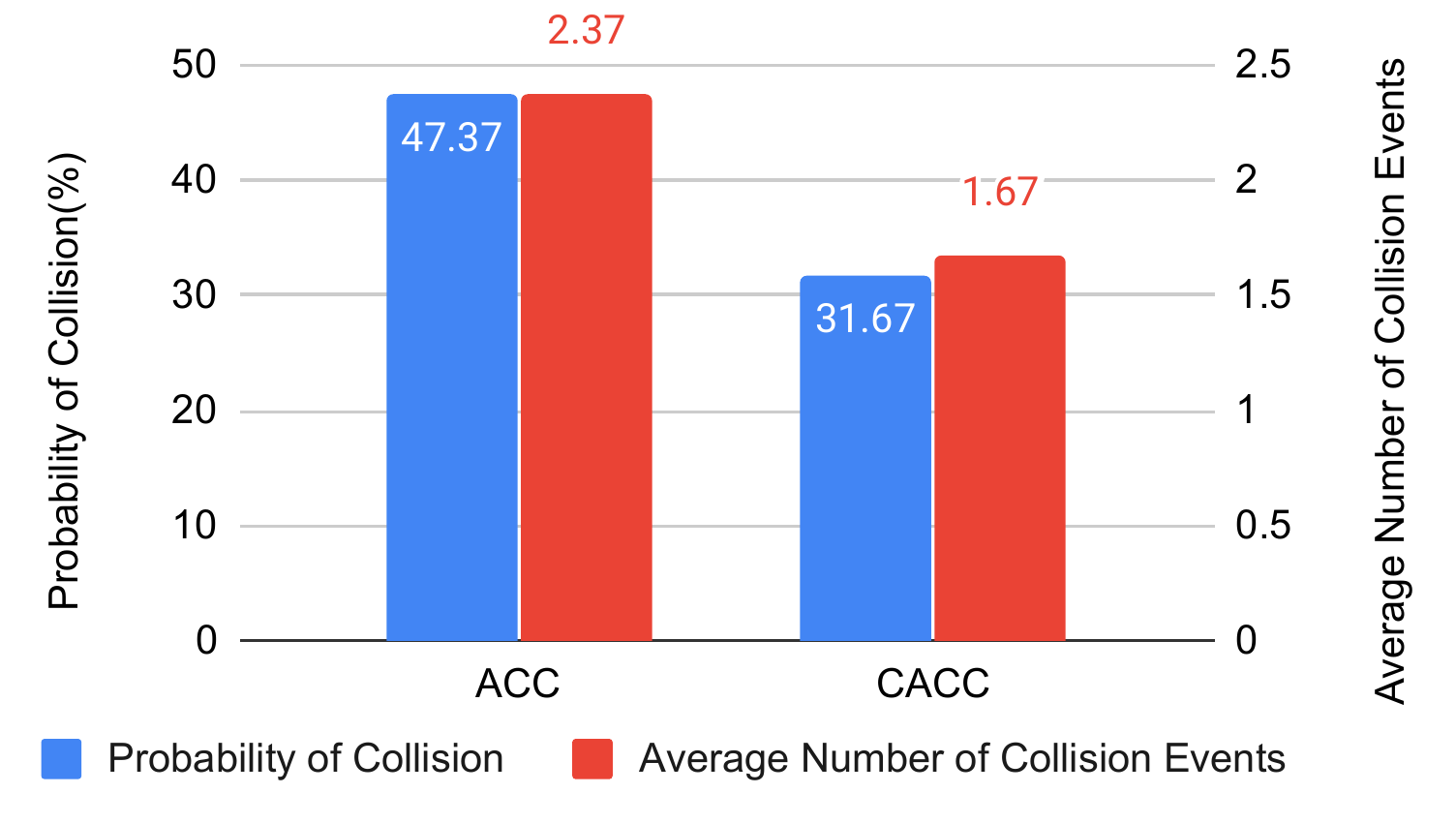}
 \caption{Probability and Severity of collisions.}
 \label{BarFig}
 \end{figure}
 
 We observe that both the probability and severity of the collisions decrease in heterogeneous platoons when a CACC control law is implemented, showing the benefits of connectivity from a safety perspective. A similar reduction was observed in the variance of spacing errors over time in the CACC platoon, as shown in Fig. \ref{VarianceFig}.
  \begin{figure}[h]
 \centering
 \includegraphics[width=9.5cm]{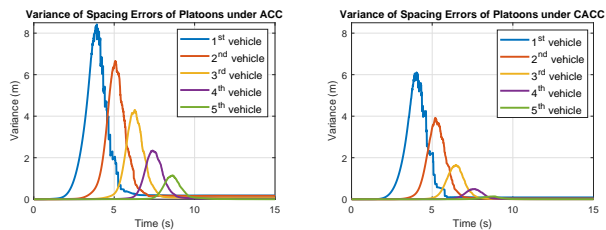}
 \caption{Variance of spacing errors over time}
 \label{VarianceFig}
 \end{figure}
 \vspace{-0.3cm}
 \section{Conclusion}

 We have developed a method to identify the maximum perturbation in position for any vehicle in a platoon given the lead vehicle's input which is relevant for safety. For CACC vehicle strings prone to packet drops, we presented a sufficient condition on the time headway to guarantee string stability and demonstrated its validity for burst-noise channels. Apart from experimental validation using drive-by-wire capable passenger cars, future work in this direction may involve exploring lossy CACC+ systems which are not amenable to the same analysis presented for lossy CACC systems in this paper. Finally, we discussed the improvements to safety metrics afforded by connectivity in vehicle strings with heterogeneous deceleration capabilities. 

 Acknowledgment: This work was partially supported by US Department of Transportation (FHWA) through grant number 693JJ32045024.

\bibliographystyle{IEEEtran}
\bibliography{Journal}

\end{document}